\documentclass[12pt]{article}    
\pdfoutput=1

\usepackage[vmargin={1.1in, 1.1in},hmargin={1in, 1in}]{geometry}

\usepackage{latexsym}
\usepackage{amssymb}
\usepackage{amsmath}
\usepackage{amsthm}
\usepackage{mathtools}
\usepackage{amsfonts}
\usepackage{graphicx}
\usepackage{enumerate}
\usepackage{color}
	\definecolor{MyDarkBlue}{rgb}{0,0.08,0.45}
\usepackage{hyperref}
	\hypersetup{linkcolor=MyDarkBlue, citecolor=MyDarkBlue, colorlinks=true}
\usepackage{setspace}
\usepackage[longnamesfirst]{natbib}
\newcommand\cites[1]{\citeauthor{#1}'s \citeyearpar{#1}}
\usepackage{hypernat}
\newtheorem*{theorem*}{Theorem}

\newtheorem{prop}{Proposition}

\theoremstyle{definition}

\newtheorem*{axiom*}{Axiom}

\newcommand{\finexhere}{\begingroup \let\mathqed\math@qedhere
    \let\qed@elt\setQED@elt \blacktriangle@stack\relax\relax \endgroup}

\newcommand{\cP}{\mathcal{P}}

\newcommand{\bE}{\mathbb{E}}

\newcommand{\bR}{\mathbb{R}}

\usepackage{xcolor}

\title{\vspace*{-3em}Hippocratic utility and Status Quo Bias}
\author{Tomasz Strzalecki\thanks{Department of Economics, Harvard University. Date: \today. I thank Elie Tamer, Davide Viviano, and Cheaheon Lim for helpful comments.}}
\date{}
\begin{document}

	\maketitle
\begin{abstract}
A utility function has been proposed that values more lives that are lost than those that are saved. I do not dispute the ethical motivation behind this kind of asymmetry. However, I show with a simple example that the scope of applicability of such a decision criterion is considerably more limited than it may first appear.
\end{abstract}

\section{A Decision Criterion}
Imagine the following scenario: a disease is present in the population and randomized controlled trials show that the ``old'' and default drug, developed in 1926, saves 10\%  of lives. According to randomized controlled trials, a ``new'' drug, just developed in 2026, saves 20\% of lives. A decision maker has to decide whether to give the default drug ($d=0)$ or the new drug ($d=1$) to a patient who has just developed the disease. 
 
This situation can be captured by the potential outcomes model \citep{neyman1923application,rubin1974estimating}. We divide the population into four principal strata. Each patient has deterministic responses $y=(y^0, y^1)$ to both drugs: $y^d=0$ if the patient dies and $y^d=1$ if they survive, for $d=0, 1$. The type of the patient is unknown to the decision maker, who only knows that $P(y^0=1)=10\%$ and $P(y^1=1)=20\%$. That is, only the marginals of the distribution are given, but the joint distribution $P$ over $\{0, 1\}^2$ is unknown. Let $\cP$ be the set of all probability distributions $P$ that are consistent with the marginals. This is the most information we can hope to obtain from a randomized controlled trial.



%


Suppose that the decision maker maximizes expected utility:
$\max_{d\in \{0, 1\}} \bE_P \ u(d, y^0, y^1),$
where $\bE_P$ is the expectation under probability $P$ and $u:D\times Y\times Y\to \bR$ is the utility function that depends on the decision $d\in D$ and the vector potential outcomes $y\in Y\times Y$.\footnote{Or equivalently, as is  often done in statistics, minimizes the loss function $\ell=-u$.}

Crucially, utility  depends on the whole vector of potential outcomes $(y^0, y^1)$. Such utility functions have recently been proposed in the medical decision making literature  \citep{ben2024policy,christy2024starting}. Motivated by the Hippocratic Oath  ``first, do no harm,'' they propose what I will call the \emph{Hippocratic utility function}, where the utility difference between $d=1$ and $d=0$ is given by:
\begin{equation}
u(d=1, y^0, y^1)-u(d=0, y^0, y^1)=\begin{cases}
	0 &\text{ if }y^0=y^1\\
	1 &\text{ if }y^0<y^1\\
	-\lambda &\text{ if }y^0>y^1,\\
\end{cases}	
\end{equation} 
%
%
where $\lambda>1$. That is, a death caused by the new drug is weighted more heavily than a death under the old drug. The parameter $\lambda>1$ measures the strength of the \emph{Hippocratic penalty}, which mathematically is the degree of the asymmetry in the utility function.

Formally this is similar to the \emph{loss aversion} of \cite{KT79}, except they were using this formalism to provide a descriptive theory of (biased) behavior, whereas we are now in the normative realm. Recognizing this distinction, \cite{mueller2023personalized} insist that for a policymaker we must have $\lambda=1$ but from an individual viewpoint the value of $\lambda$ can exceed one.

In Sections 2 and 3, I will briefly discuss two well-known problems with Hippocratic utility. Despite these arguments many scholars have a deep conviction that Hippocratic utility  is the normatively correct criterion. In Section 4, I will demonstrate another problem, independent of those two above. Even if you do not regard the issues in Sections 2 and 3 as decisive, I will argue that Hippocratic utility has a much smaller domain of applicability than is often presumed.

\section{Choosing the Dominated Option}
Much has been discussed about the fact that if $\lambda$ is high enough, then the old drug will be prescribed even though the new one saves twice as many lives. 

\begin{prop}
For any  $P\in \cP$ that puts a positive probability on all four strata there exists $\lambda>1$ such that the optimal decision is $d=0$.	
\end{prop}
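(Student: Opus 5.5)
The argument is a direct computation of the expected utility difference between the two decisions under a fixed $P\in\cP$. Write $p_{ab}=P(y^0=a,y^1=b)$ for $a,b\in\{0,1\}$. By the definition of the Hippocratic utility function, the strata with $y^0=y^1$ contribute nothing, so
\[
\bE_P\, u(1,y^0,y^1)-\bE_P\, u(0,y^0,y^1)=p_{01}-\lambda\, p_{10}.
\]
Only the difference matters for the decision, so the levels of $u(0,\cdot,\cdot)$ never enter. The optimal decision is $d=0$ exactly when this difference is negative, or nonpositive if ties are broken toward the default.

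The hypothesis that $P$ puts positive probability on all four strata gives $p_{10}>0$. We can therefore take any
\[
\lambda>\max\Bigl\{1,\ \frac{p_{01}}{p_{10}}\Bigr\},
\]
for which $p_{01}-\lambda p_{10}<0$, so $d=0$ is strictly optimal. As a consistency check with the marginals, $p_{01}-p_{10}=P(y^1=1)-P(y^0=1)=0.1$, so $p_{01}/p_{10}=1+0.1/p_{10}>1$. The threshold is thus strictly above $1$ and finite, and the required $\lambda$ is a genuine Hippocratic penalty exceeding one.

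The only delicate point is the role of the hypothesis. Positivity of $p_{10}$ is essential: if $p_{10}=0$, the difference equals $p_{01}=0.1>0$ for every $\lambda$, and $d=1$ is always optimal. Positivity of the other three strata is not needed, so the hypothesis is stronger than required.
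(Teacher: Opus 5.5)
Your proof is correct and matches the paper's: both compute the expected utility difference $p_{01}-\lambda p_{10}$ and use $p_{10}>0$ to make it negative for large $\lambda$. The paper uses the marginals to rewrite this difference as $0.1+(1-\lambda)p_{10}$, which gives exactly your threshold $\lambda>p_{01}/p_{10}=1+0.1/p_{10}$, and its footnote makes your same observation that when $p_{10}=0$ the choice is $d=1$ for every $\lambda$.
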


\begin{proof}
    Let the probabilities of the four strata under $P$ be $P^{00} :=P(y^0=0, y^1=0), P^{01}:=P(y^0=0, y^1=1), P^{10}:=P(y^0=1, y^1=0),P^{11}:=P(y^0=1, y^1=1)$. The expected utility difference between $d=1$ and $d=0$ is $P^{01}-\lambda P^{10}$. We know that $P^{01}+P^{11}=20\%$ and $P^{10}+P^{11}=10\%$. Thus, the expected utility difference is $10\%+(1-\lambda)P^{10}$. This is less than zero for $\lambda$ large enough, provided that $P^{10}>0$.\footnote{This is bigger than zero for all values of $\lambda>1$ when $P^{10}=0$, which \cite{pearl2022probabilities} calles the monotonicity assumption; this is related but distinct from the monotonicity assumption of \cite{ImbensGuidoW1994IAEO}.}	
\end{proof}

This violation of stochastic dominance arises because Hippocratic utility penalizes deaths differently across strata. Proponents of Hippocratic utility argue that this is precisely their intention and that weighting some lives more heavily than others in this way is ethically well‑motivated.

Based on this violation of dominance, \cite{dawid2023personalised} ``argue that this approach is dangerously misguided and should not be
used in practice.'' Such objections have also been voiced by \cite{sarvet2025perspectives}. \cite{gelman2025russian} argue that utility should be defined directly as a function of the marginals of $P$ (what is sometimes called ``stochastic potential outcomes''), and not as a $P$-expectation of a function defined over $y$. This rules out Hippocratic utility because it leads to decision $d=1$.

\section{Choice Indeterminacy}
It has also been noticed that if only the marginals of $P$ are known, the decision criterion does not offer any clear advice to the decision maker. As we showed in the example above, not all probability measures $P\in \cP$ lead to the same decision because expected utility depends on the correlation structure of $P$. In our example, if $P\in \cP$ and $P^{10}=0$, then no matter how high $\lambda$ is, it is optimal to choose $d=1$, while for all other $P\in \cP$ the decision $d=0$ is taken for high enough values of $\lambda$. 

Thus, the decision will depend on a ``free parameter.'' To proceed, we could choose $P\in \cP$ to minimize regret, as in  \cite{ben2024policy}. Or we could instead use maxmin expected utility: $\max_{d\in \{0, 1\}} \min_{P\in \cP} \bE_P \ u(d, y^0, y^1).$ We could also assume that $P$ is a product measure. Or we could use maxmax utility. There are infinitely many possibilities\ldots

Each of these ways of choosing the free parameter leads to a different decision rule, reflecting a different normative stance. And the choice between these rules is not based on the data at hand, and not even on the value of $\lambda$, but rather on some other reasoning. 

Another solution to this problem is to only use utility functions that lead to the same decision for all $P\in \cP$. \cite{koch2025statistical} show that the value of expected utility depends only on the marginals if and only if the utility is an additive function of the vector of potential outcomes. This rules out Hippocratic utility (and coincides with \cites{gelman2025russian} proposal in the binary case).

The choice indeterminacy is a consequence of the fact that $P$ is only partially identified. This is related to the doubts expressed by \cite{sarvet2025perspectives}  who say ``a serious problem is that we have no direct evidence that these principal strata exist. '' As discussed by \cite{mueller2023personalized,mueller2025meaning}, fusing experimental data with observational data can help shrink the set $\cP$ and in some such cases the choice indeterminacy disappears (see also \citealt{sarvet2025rejoinder}, who show that in the case of fused data what they call ``counterfactual'' and ``interventionist'' decision rules coincide). I will not delve into this issue because my argument against Hippocratic utility (in the next section) applies even in cases where we somehow magically know $P$. 

\section{Status Quo Bias}
My thought experiment corresponds to decision problem 1 in Table \ref{tab:problems}, where  $d=0$ is one chemical compound (call it $A$) and $d=1$ is another chemical compound (call it $B$).

\begin{table}[htb!]
	\centering
	\begin{tabular}{c|cc}
		\hline
		decision problem & $d=0$ & $d=1$ \\
		\hline
			           1 &              compound $A$ &              compound $B$\\
			           2 &              compound $B$ &              compound $A$\\
			           3 &              do not treat &              compound $B$\\
			           4 &              do not treat, patient self-medicates with $A$ &              compound $B$\\
					   
		\hline
		\end{tabular}
		\caption{Decision Problems}	\label{tab:problems}

\end{table}

Consider now decision problem 2, where compound $B$ was discovered in 1926 and compound $A$ was discovered in 2026. Now a decision maker who has  $\lambda$ high enough will prescribe compound $B$ to the patient. Thus, by comparing decision problems 1 and 2, we conclude that Hippocratic utility suffers from status quo bias \citep{samuelson1988status}: the decision whether to prescribe  compound $A$ or compound $B$ depends on a completely random historical fact (which compound was discovered first). 

I have a hard time believing that a decision maker who wants to do ``no harm'' would want to base their decisions on random historical facts. I hope that the reader agrees with me here! I am not arguing against using Hippocratic utility in general, just that it does not apply to decision problems 1 and 2.\footnote{We might want to privilege the established compound because, relatively speaking, we have far fewer trials using the new one, or because we fear that the new compound may have negative long-term side effects not yet revealed by randomized trials. These are important considerations and should be modeled explicitly. In this note I am focusing on the \emph{ethical} reasons for taking $\lambda>1$, so I am abstracting from such scientific uncertainty.}

To be fair to the literature, the proponents of the utility function with $\lambda>1$ focus on slightly different situations than the one described here. Consider decision problem 3, where choosing $d=0$ corresponds to not giving a drug at all. When $d = 0$ corresponds to ``no intervention,'' there is perhaps a case to be made. When $d=0$ is ``whichever drug was discovered first,'' there isn't a case in my view.



But what does ``no intervention'' really mean? Consider decision problem 4, which is a slight modification of decision problem 3. Now compound $A$ is sold over-the-counter. The decision maker thinks that if decision  $d=0$ ``no treatment at all'' is taken, the patient will start self-medicating with compound $A$  with some probability. If that probability is high, the effective value of $\lambda$ should be close to one, so the same decision is made as with $\lambda=1$. Thus, I hope that if you like Hippocratic utility because of your ethical conviction, you agree with me that we should choose $d=1$ in decision problem 4. But then what is exactly the difference between problems 3 and 4 that makes you choose $d=0$ in problem 3?

\section{Discussion}

In principle, there is a way to apply Hippocratic utility to decision problems 1 and 2 without introducing status quo bias. Here, the researcher would define the utility function directly over  compounds $A$ and $B$, not objects like $d=0$ and $d=1$. This leads to internally consistent choices that satisfy completeness and transitivity, but violate dominance. The only difference from before is that now the decision maker prefers compound $A$ over compound $B$ independently of which one is the ``default action'' or which one was discovered first. Now we are sacrificing lives based on the chemical difference between compounds $A$ and $B$.\footnote{Of course, we could violate dominance because of the  production costs of different compounds. But those costs do not depend on strata, so they cannot lead to Hippocratic utility.} See \cite{koch2026axiomaticfoundationdecisionscounterfactual} for a formal treatment.





There are also legal arguments for using Hippocratic utility. When a treated patient dies, it may be easier for their family to sue the hospital, compared to the family of a dead, untreated patient \citep{tian2000probabilities,mueller2025meaning}. This motivation for Hippocratic utility is distinct from the ethical one, because here we would choose $d=0$ in both decision problems 3 and 4. In such applications the value of $\lambda$ should be proportional to the cost of the lawsuit (relative to the moral value of saving a life).


I conclude by proposing that we clearly separate two issues: (1) what is statistically identifiable, and (2) what the utility function should look like.  Assuming that we know $P$ exactly may be unrealistic, but it allows us to zoom in on the structure of the utility function. Because this is something that researchers disagree about, having clarity might be useful.
   
\begin{spacing}{1.25}
\bibliographystyle{econometrica}
\bibliography{bibilio-SDT.bib}
\end{spacing}

\end{document}